\newtheorem{theorem}{Theorem}
\newtheorem{proposition}[theorem]{Proposition}
\newtheorem{corollary}[theorem]{Corollary}
\newenvironment{proof}[1][Proof]{\begin{trivlist}
\item[\hskip \labelsep {\bfseries #1}]}{\end{trivlist}}
\newenvironment{remark}[1][Remark]{\begin{trivlist}
\item[\hskip \labelsep {\bfseries #1}]}{\end{trivlist}}
\newcommand{\abs}[1]{\ensuremath{|#1|}}
\newcommand{\norm}[2]{\ensuremath{|\!|#1|\!|_{#2}}}
\newcommand{\Norm}[2]{\ensuremath{\left|\!\left|#1\right|\!\right|_{#2}}}
\newcommand{\tr}{\textnormal{tr}}
\newcommand{\trace}[1]{\ensuremath{\tr (#1)}}
\newcommand{\ket}[1]{| #1 \rangle}
\newcommand{\bracket}[3]{\langle #1 | #2 | #3 \rangle}
\newcommand{\proj}[2]{| #1 \rangle\!\langle #2 |}
\renewcommand{\d}[1]{\ensuremath{\textnormal{d}#1}}
\newcommand{\cE}{\mathcal{E}}
\newcommand{\cL}{\mathcal{L}}
\newcommand{\cM}{\mathcal{M}}
\newcommand{\cP}{\mathcal{P}}
\newcommand{\cT}{\mathcal{T}}
\newcommand{\cZ}{\mathcal{Z}}
\newcommand{\id}{{\rm{id}}}
\def\>{{\rangle}}
\def\<{{\langle}}
\newcommand{\be}{\begin{equation}}
\newcommand{\ee}{\end{equation}}
\newcommand{\bea}{\begin{eqnarray}}
\newcommand{\eea}{\end{eqnarray}}
\newcommand{\C}{\mathbb{C}}
\newcommand{\1}{\mathbbm{1}}
\newcommand{\ra}{\rightarrow}
\newcommand{\qed}{\hfill\fbox\\\medskip }
\begin{document}
\title{\Large\sc Perturbation Bounds for Quantum Markov Processes\\ and their Fixed Points}
\author{Oleg Szehr, Michael~M.~Wolf}
\affiliation{\vspace*{3pt}Department of Mathematics, Technische Universit\"at M\"unchen, 85748 Garching, Germany\vspace*{3pt}}
\date{\today}


\begin{abstract}  
We investigate the stability of quantum Markov processes with respect to perturbations of their transition maps. In the first part, we introduce a condition number that measures the sensitivity of fixed points of a quantum channel to perturbations. We establish upper and lower bounds on this condition number in terms of subdominant eigenvalues of the transition map. 

In the second part, we consider quantum Markov processes that converge to a unique stationary state and we analyze the stability of the evolution at finite times. In this way we obtain a linear relation between the mixing time of a quantum Markov process and the sensitivity of its fixed point with respect to perturbations of the transition map.

\end{abstract}
\maketitle
\section{Introduction} Quantum Markov Processes naturally occur in various directions of quantum physics such as quantum statistical physics, quantum optics or quantum information theory. Whenever the time evolution of some quantum system does not depend on its history, it can be appropriately described as a quantum Markov process.  Here we have in particular in mind evolutions of open quantum systems which eventually converge to a set of stationary states.

Such evolutions either arise naturally from relaxation or equilibration, or they may be engineered for instance for the purpose of dissipative quantum computation \cite{VWC2008}, dissipative quantum state preparation \cite{VWC2008, DiehlNature, KrausPRA}  or quantum Metropolis sampling \cite{QMetropolis}. In those cases, the quantum Markov chain is designed so that it drives any initial state towards a sought target---preferably as rapid as possible.

The present work is devoted to the question how sensitive stationary states are to perturbations of the transition map of the corresponding Markov chain. While this appears to be a well studied subject for classical Markov chains \cite{Overview,Sen1,Sen2,Mit1,Mit2}, it is, to the best of our knowledge, essentially untouched territory for their quantum counterparts. Guided by the classical theory we will follow two alternative approaches, both of which result in an inequality of the form 
\be\|\rho_1-\rho_2\|\leq\kappa\|\cT_1-\cT_2\|.\label{eq:targetineq}\ee
Eq.(\ref{eq:targetineq}) relates the distance between two stationary states $\rho_1$ and $\rho_2$ to the distance between  two  quantum channels $\cT_1$ and $\cT_2$ from which those states arise as fixed points $\rho_i=\cT_i(\rho_i)$. A little thought reveals that such an inequality cannot hold in general if $\kappa$ is a constant merely depending on the chosen norm and possibly the dimension of the underlying space: let the $\cT_i$'s for instance be random dissipative perturbations of a unitary evolution. Then, irrespective of the size of the perturbation, there will generically not be any relation between the corresponding stationary states. So their distance can not be bounded in terms of the perturbation of the transition map, unless $\kappa$  depends on additional properties of at least one of the channels, e.g., $\kappa=\kappa(\cT_1)$. A property which suggests itself in this context is the rate of convergence: intuitively, if the Markov chain generated by $\cT_1$ converges rapidly towards $\rho_1$, one expects that the fixed point is rather robust with respect to perturbations of the transition map $\cT_1$. Conversely, if the mixing time is very long, i.e., if there are other states which are almost stationary already and  converge to $\rho_1$ only on a very large time scale, one expects a small perturbation to be sufficient in order to change the stationary state significantly.

We will follow two approaches which make this intuition rigorous. In Sec.\ref{FIX} we will directly derive an inequality of the form in Eq.(\ref{eq:targetineq}) where $\kappa$ is expressed in terms of a particular condition number which we will relate to spectral properties of the transition map in Sec.\ref{Sec:specbounds}.  Alternatively, we will in Sec.\ref{FIN} derive perturbation bounds for finite times for discrete as well as for continuous time quantum Markov processes with unique stationary state. Those bounds will be expressed in terms of an assumed exponential convergence bound. Hence, they are applicable whenever such a convergence bound can be obtained via one of the various existing tools such as logarithmic Sobolev inequalities \cite{LogSobolev}, $\chi^2$-divergence \cite{Chi2}, Hilbert's projective  metric \cite{Hilbert} or spectral theory \cite{szrewo}. If by any of those tools a time-scale is identified on which convergence is guaranteed, then the results in Sec.\ref{FIN} essentially provide a linear bound on the sensitivity coefficient in terms of that  mixing time bound.

\section{Preliminaries}\label{PREL}
\subsection{Quantum states and quantum evolutions}
\label{prel:not}
We begin with fixing the notation and terminology.
We will throughout consider finite-dimensional Hilbert spaces isomorphic to  $\C^d$ for some $d\in\mathbb{N}$. The notion of a \emph{state}  refers to a density matrix, i.e., a positive semidefinite matrix $\rho\in\cM_d(\C)$, $\rho\geq 0$ with unit trace $\tr[\rho]=1$. Here, $\cM_d(\C)$ denotes the space of complex valued $d\times d$ matrices. The objects of interest in this work are linear maps on $\cM_d(\C)$ for which we reserve the letters $\cT, \cE$ and $\cL$.  For each such map $\cT$ the   \emph{dual map} $\cT^*:\cM_d(\C)\ra \cM_d(\C)$ is defined  by imposing that $\forall A,B\in\cM_d(\C):\tr[\cT^*(A)B]=\tr[A\cT(B)]$. $\cT$ is called \emph{Hermiticity preserving} iff $\forall A\in\cM_d(\C): \cT(A)^\dagger=\cT(A^\dagger)$, \emph{positive} iff $A\geq 0\Rightarrow \cT(A)\geq 0$ and \emph{trace-preserving} iff $\forall A\in\cM_d(\C):\tr[\cT(A)]=\tr[A]$. The latter is equivalent to the fact that the dual map preserves the identity matrix $\1=\cT^*(\1)$. The identity map on $\cM_d(\C)$ will be denoted by $\id$. 

Our primary interest lies in \emph{quantum channels}, i.e., \emph{completely positive} and trace-preserving linear maps, which describe the time evolution of quantum systems for a single time step.  We will, however,  state all our results for maps which are positive but not necessary completely positive since the proofs do not require the stronger assumption of complete positivity.

Let now $\cT$ be any linear, trace-preserving and positive map on $\cM_d(\C)$. The spectrum ${\rm spec}[\cT]:=\{\lambda\in\C|\exists X:\cT(X)=\lambda X\}$ then contains $1$, is closed w.r.t.  complex conjugation  and is contained in the unit disc. That is, $\lambda\in {\rm spec}[\cT]\Rightarrow \bar{\lambda}\in{\rm spec}[\cT]$ and the spectral radius $\varrho(\cT):=\max\{|\lambda|\big|\lambda\in{\rm spec}[\cT]\}$ satisfies $\varrho(\cT)=1\in {\rm spec}[\cT]$. 

A state which satisfies $\rho=\cT(\rho)$ will be called a \emph{stationary state}. The set of stationary states is always non-empty and in fact spans the space of all fixed points of $\cT$. The projection onto this space will be denoted by $\cT^\infty$ and it can be expressed as a Cesàro mean via $$\cT^\infty=\lim_{n\ra\infty}\frac{1}{n}\sum_{k=1}^n \cT^k,$$
where $\cT^k=\cT\circ\ldots\circ\cT$ stands for the $k$-fold composition of $\cT$. Clearly, if $1$ is the only  eigenvalue of $\cT$ of modulus one, then this simplifies to $\cT^\infty=\lim_{n\ra\infty}\cT^n$. Note that the spectral properties of $\cT$ and $\cT^\infty$ guarantee that the map 
\be \cZ(\cT):=\big(\id-(\cT-\cT^\infty)\big)^{-1},\label{eq.fundmatdef}\ee
always exists.
For more details on spectral properties of (completely) positive maps we refer to \cite{WPG_eigenvalues}.

When applied to an initial state, the sequence $\{\cT^n\}_{n\in\mathbb{N}}$ can be regarded as a finite and homogeneous \emph{quantum Markov chain} with $\cT$ as its transition map. The classical case described by a stochastic matrix $S\in\cM_d(\mathbb{R}_+)$ can be embedded into this framework by fixing an orthonormal basis $\{|i\>\}_{i=1}^d$ and setting $\cT(\cdot)=\sum_{i,j=1}^d S_{i,j}\bracket{i}{\cdot}{i}\:\proj{j}{j}$.

\subsection{Norms and contraction coefficients}
For any $X\in\cM_d(\C)$ we denote by $\norm{X}{1}:=\tr\big[\sqrt{X^\dagger X}\big]$ the \emph{Schatten 1-norm} or \emph{trace norm} of $X$.  When applied to quantum states, the induced metric $(\rho_1,\rho_2)\mapsto\|\rho_1-\rho_2\|_1$ quantifies how well the two states can be distinguished in an optimally chosen  statistical experiment. 

For any linear map $\cL:\cM_d(\C)\ra\cM_d(\C)$ the induced 1-to-1-norm is defined as
$$\norm{\cL}{1\rightarrow1}:=\sup_{X\neq0}\frac{\norm{\cL(X)}{1}}{\norm{X}{1}},\  X\in\cM_d(\C).$$
By Gelfand's formula we can express the spectral radius of $\cL$ in terms of this norm as \cite{Bhatia}
\begin{align}
\varrho(\cL)=\lim_{n\rightarrow\infty}\Norm{\cL^n}{1\rightarrow1}^{1/n}.\label{specrad}
\end{align}
If $\cT$ is trace-preserving and positive, then $\|\cT\|_{1\ra 1}=1$.

We will frequently use the so called \emph{coefficient of ergodicity} or \emph{trace norm contraction coefficient} which is defined as 
$$\tau(\cL):=\sup_{\sigma^\dagger=\sigma\neq0\atop \trace{\sigma}=0} \frac{\Norm{\cL(\sigma)}{1}}{\Norm{\sigma}{1}}.$$
This quantity can equivalently be obtained via an optimization over orthogonal pure states \cite{Rus94},
\begin{align}
\tau(\cL)=\frac{1}{2}\sup_{\varphi\perp\psi}\Norm{\cL(\proj{\varphi}{\varphi})-\cL(\proj{\psi}{\psi})}{1}.\label{ergcoeff}
\end{align}
Here the supremum is taken over all pairs of orthogonal unit vectors. For linear maps which are Hermiticity preserving and trace-preserving  it follows readily from the definition of $\tau$ that
\begin{align}
\tau(\cL_1\circ \cL_2)\leq \tau(\cL_1)\tau(\cL_2),\  \mbox{and so }\  \tau(\cL^n)\leq\tau(\cL)^n\label{boundingpowers}
\end{align}
for all $n\in\mathbb{N}$.
Finally, note that  $0\leq\tau(\cT)\leq1$ if $\cT$ is positive and trace-preserving.
\section{Stability of Fixed Points} \label{FIX}

\subsection{The main inequality}
One of the first sensitivity analyses for fixed points of classical Markov chains was provided by Schweitzer \cite{Schweitzer} in terms of the so called \emph{fundamental matrix} of a classical Markov chain. Here, we generalize his approach to the quantum setting. The immediate analogue of Schweitzer's fundamental matrix is the map $\cZ(\cT):\cM_d(\C)\ra\cM_d(\C)$ defined in Eq.(\ref{eq.fundmatdef}). This leads to the main inequality:

 \begin{theorem}\label{SCH} Let $\cT_1,\cT_2:\cM_d(\C)\rightarrow\cM_d(\C)$ be trace-preserving, positive linear maps. For every stationary state $\rho_2$ of $\cT_2$ the  stationary state $\rho_1:=\cT_1^\infty(\rho_2)$ of $\cT_1$ satisfies
\be\Norm{\rho_1-\rho_2}{1}\leq\kappa\Norm{\cT_1-\cT_2}{1\rightarrow 1}\quad \mbox{with}\quad \kappa=\tau\big(\cZ(\cT_1)\big).\ee
\end{theorem}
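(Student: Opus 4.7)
The plan is to mimic Schweitzer's classical argument by inverting $\id-\cT_1$ on the relevant traceless-Hermitian subspace using the fundamental map $\cZ(\cT_1)$, and then to exploit the contraction coefficient $\tau$ to obtain the sharp constant.

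First, I would subtract the two fixed-point equations $\rho_1=\cT_1(\rho_1)$ and $\rho_2=\cT_2(\rho_2)$ and add and subtract $\cT_1(\rho_2)$ to obtain
\[(\id-\cT_1)(\rho_1-\rho_2)=(\cT_1-\cT_2)(\rho_2).\]
Next, I would invoke the hypothesis $\rho_1=\cT_1^\infty(\rho_2)$ together with the idempotence of the projection $\cT_1^\infty$ to conclude that $\cT_1^\infty(\rho_1-\rho_2)=\cT_1^\infty(\rho_1)-\cT_1^\infty(\rho_2)=\rho_1-\rho_1=0$. This allows me to add $\cT_1^\infty(\rho_1-\rho_2)$ to the left-hand side at no cost, turning $\id-\cT_1$ into $\id-(\cT_1-\cT_1^\infty)$, whose inverse is precisely $\cZ(\cT_1)$ by Eq.(\ref{eq.fundmatdef}). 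Applying that inverse yields the key identity
\[\rho_1-\rho_2=\cZ(\cT_1)\bigl[(\cT_1-\cT_2)(\rho_2)\bigr].\]

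Then I would observe that $Y:=(\cT_1-\cT_2)(\rho_2)$ is automatically \emph{Hermitian} (each $\cT_i$ is Hermiticity preserving, being positive) and \emph{traceless} (each $\cT_i$ is trace preserving, and $\tr[\rho_2]=1$). This is exactly the class of operators over which the supremum defining the coefficient of ergodicity is taken, so the definition of $\tau$ immediately delivers
\[\Norm{\cZ(\cT_1)(Y)}{1}\leq\tau\bigl(\cZ(\cT_1)\bigr)\Norm{Y}{1}.\]
Bounding the remaining factor by $\Norm{Y}{1}\leq\Norm{\cT_1-\cT_2}{1\to1}\Norm{\rho_2}{1}=\Norm{\cT_1-\cT_2}{1\to1}$ completes the argument.

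The principal subtlety, I expect, is the second step: it is essential that $\rho_1$ is defined specifically as $\cT_1^\infty(\rho_2)$ rather than as an arbitrary fixed point of $\cT_1$, since only this alignment forces $\rho_1-\rho_2$ into the kernel of $\cT_1^\infty$ and thus allows the non-invertibility of $\id-\cT_1$ on the fixed-point subspace to be circumvented by $\cZ(\cT_1)$. A secondary subtlety worth noting, but essentially for free, is that applying $\tau$ rather than the larger operator norm $\Norm{\cZ(\cT_1)}{1\to1}$ is legitimate only because $Y$ is traceless and Hermitian; this is the reason the stronger coefficient $\tau(\cZ(\cT_1))$ appears in the conclusion, as desired.
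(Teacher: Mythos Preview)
Your proposal is correct and follows essentially the same route as the paper: both derive the identity $\rho_1-\rho_2=\cZ(\cT_1)\circ(\cT_1-\cT_2)(\rho_2)$ by recognizing that $\cT_1^\infty(\rho_1-\rho_2)=0$ (the paper computes $\cZ(\cT_1)^{-1}(\rho_1-\rho_2)$ directly, you add and subtract to reach the same point), and then both bound using the fact that $(\cT_1-\cT_2)(\rho_2)$ is Hermitian and traceless so that $\tau$ rather than $\Norm{\cdot}{1\to1}$ applies. Your exposition is, if anything, slightly more explicit about why the specific choice $\rho_1=\cT_1^\infty(\rho_2)$ is needed and why $\tau$ is the right coefficient.
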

\begin{proof} For all such pairs $\rho_1,\rho_2$ it holds that
\bea \cZ(\cT_1)^{-1}(\rho_1-\rho_2) &=& \big(\id-(\cT_1-\cT_1^\infty)\big)(\rho_1-\rho_2)\\
&=&\cT_1(\rho_2)-\rho_2 = (\cT_1-\cT_2)(\rho_2),
\eea
which leads to the identity
\be (\rho_1-\rho_2)=\cZ(\cT_1)\circ(\cT_1-\cT_2)(\rho_2).\label{eq:mainidentity}\ee
Taking the Schatten 1-norm on both sides and abbreviating $\sigma:=(\cT_1-\cT_2)(\rho_2)$ we can write
\be \|\rho_1-\rho_2\|_1=\frac{\|\cZ(\cT_1)(\sigma)\|_1}{\|\sigma\|_1}\ \frac{\|(\cT_1-\cT_2)(\rho_2)\|_1}{\|\rho_2\|_1}, \ee
from which we obtain the claimed inequality by taking the supremum over all $\rho_2\in\cM_d(\C)$ and over all traceless Hermitian $\sigma$.\qed
\end{proof}

Evidently, the identity in Eq.(\ref{eq:mainidentity}) can be used to derive a plethora of different norm bounds (cf. \cite{Overview} for an overview on different approaches for classical Markov chains). Here we focus on the trace norm since this seems to be the most natural choice in the quantum context. In addition the trace norm dominates all other unitarily invariant norms on $\cM_d(\C)$ \cite{Bhatia} and makes the obtained bounds in this sense the strongest possible ones.

In the following proposition we bound the condition number  of Thm.\ref{SCH} in terms of a better studied object \cite{Rus94,Chi2,Hilbert} with an operational meaning, namely the trace-norm contraction coefficient of the quantum channel:

\begin{proposition}Let $\cT$ be a trace-preserving, positive linear map on $\cM_d(\C)$ with a unique stationary state. Then
\be\tau\big(\cZ(\cT)\big)\leq \big(1-\tau(\cT)\big)^{-1}.\ee
\end{proposition}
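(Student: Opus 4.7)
My plan is to represent $\cZ(\cT)$, when restricted to the traceless Hermitian subspace, as a Neumann series in $\cT$ alone, and then bound it geometrically using the contraction coefficient. The case $\tau(\cT)=1$ makes the stated bound vacuous, so I would immediately restrict attention to $\tau(\cT)<1$.

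The crucial reduction comes from uniqueness of the stationary state $\rho_*$: this forces the fixed-point projector to act as the rank-one map $\cT^\infty(X)=\tr[X]\,\rho_*$, so $\cT^\infty$ annihilates every traceless element. Since $\cT$ is Hermiticity- and trace-preserving, traceless Hermitian matrices form an invariant subspace for $\cT$, and a one-line induction then gives $(\cT-\cT^\infty)^k(\sigma)=\cT^k(\sigma)$ for every $k\ge 0$ and every traceless Hermitian $\sigma$.

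I would then use the submultiplicativity $\tau(\cT^k)\le\tau(\cT)^k$ together with $\tau(\cT)<1$ to conclude that, on the traceless Hermitian subspace, the $1$-to-$1$ norm of $\cT$ is strictly less than one, so $\id-\cT$ is invertible there and---by the previous reduction---agrees with $\cZ(\cT)^{-1}$ on this subspace. Consequently,
$$\cZ(\cT)(\sigma)=\sum_{k=0}^\infty\cT^k(\sigma)$$
as an absolutely convergent Neumann series in trace norm. Term-by-term estimation combined with the geometric sum yields $\|\cZ(\cT)(\sigma)\|_1\le\|\sigma\|_1/(1-\tau(\cT))$ for every nonzero traceless Hermitian $\sigma$, and taking the supremum of the resulting ratio delivers the claimed bound.

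The main obstacle is the justification of the Neumann expansion, i.e.\ checking that $\id-\cT$ really is invertible on the traceless Hermitian subspace under the given hypotheses. This is the step where unique stationarity of $\rho_*$ must be combined with Gelfand's formula applied to $\tau(\cT)<1$ to guarantee that the spectrum of $\cT$ restricted to the traceless subspace sits strictly inside the unit disc. Once this spectral fact is in place, the remainder collapses to a routine geometric-series computation.
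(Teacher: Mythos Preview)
Your proof is correct and follows essentially the same route as the paper's: both arguments hinge on the observation that uniqueness of the stationary state forces $\cT^\infty(\sigma)=0$ for traceless $\sigma$, so that $(\cT-\cT^\infty)^k(\sigma)=\cT^k(\sigma)$, after which the Neumann series together with $\tau(\cT^k)\le\tau(\cT)^k$ yields the geometric bound. The only cosmetic difference is ordering---the paper expands $\cZ$ globally as $\sum_k(\cT-\cT^\infty)^k$ and then restricts to traceless Hermitian $\sigma$, whereas you restrict first and expand $\sum_k\cT^k$---and your invocation of Gelfand's formula is unnecessary, since $\tau(\cT)<1$ \emph{is} already the operator norm of $\cT$ on the traceless Hermitian subspace, so the Neumann series converges directly.
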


\begin{proof} We express $\cZ(\cT)$ via its von Neumann series expansion $\cZ=\sum_{k=0}^{\infty}\left(\cT-\cT^\infty\right)^k$ and get that
\begin{align}
\tau(\cZ)&=\sup_{\sigma^\dagger=\sigma\atop\trace{\sigma}=0}\frac{\Norm{\sum_{k=0}^\infty\left(\cT-\cT^\infty\right)^k(\sigma)}{1}}{\Norm{\sigma}{1}}\nonumber\\
&\leq\sum_{k=0}^{\infty}\sup_{\sigma^\dagger=\sigma\atop\trace{\sigma}=0}\frac{\Norm{\left(\cT-\cT^\infty\right)^k(\sigma)}{1}}{\Norm{\sigma}{1}}=\sum_{k=0}^{\infty}\tau(\cT^k)\label{trless}\\
&\leq\sum_{k=0}^{\infty}\left[\tau(\cT)\right]^k=\frac{1}{1-\tau(\cT)},\quad\mbox{if}\quad \tau(\cT)<1.\nonumber
\end{align}
To obtain Eq.~\eqref{trless} we used that $(\cT-\cT^\infty)^k=\cT^k-\cT^\infty$ if $k>0$ and that $\trace{\sigma}=0$ implies $\cT^\infty(\sigma)=0$ since uniqueness of the fixed point means that $\cT^\infty$ acts as $X\mapsto\tr{[X]}\rho$. \qed
\end{proof}

Note that $\tau(\cT)$ has an operational meaning. Since $\tau(\cT)=\frac{1}{2}\sup_{\varphi\perp\psi}\Norm{\cT(\proj{\varphi}{\varphi})-\cT(\proj{\psi}{\psi})}{1}$ by Eq.~\eqref{ergcoeff}, it is directly related to the maximum probability with which two orthogonal inputs can be distinguished at the output of $\cT$.
\subsection{Spectral bounds on $\tau(\cZ)$}\label{Sec:specbounds}
In this subsection we prove that the sensitivity of the set of stationary states of a quantum Markov chain to perturbations is related to the closeness of the subdominant eigenvalues to $1$. More precisely, we show that if there exists a subdominant eigenvalue of $\cT$ close to 1, then the chain is ill conditioned in the sense that $\tau(\cZ)$ is large. On the other hand if all eigenvalues are well separated from $1$, the process is well conditioned. The following theorem quantifies this observation.
We note that the relevant spectral quantity is not equal to the \emph{spectral gap} $\min\{1-|\lambda|\big|\lambda\in{\rm spec}[\cT]\setminus\{1\}\}$ which also appears frequently in convergence analyses.
\begin{theorem}
Let $\cT$ be a trace-preserving, positive linear map on $\cM_d(\C)$ and $\Lambda:={\rm spec}[\cT]\backslash\{1\}$ the set of its non-unit eigenvalues. Then 
\be\frac{1}{\min_{\lambda\in\Lambda}\abs{1-\lambda}}\leq\tau\big(\cZ(\cT)\big)\leq\frac{2(5\pi/3+2\sqrt{2})d^3}{\min_{\lambda\in\Lambda}\abs{1-\lambda}}.\ee
\end{theorem}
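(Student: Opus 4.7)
The statement consists of a lower and an upper bound on $\tau(\cZ(\cT))$ in terms of the distance $\min_{\lambda\in\Lambda}|1-\lambda|$, and I would handle them separately.

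For the lower bound, the strategy is to recognize $\tau(\cZ(\cT))$ as an operator norm that dominates a spectral radius. Since $\cT^*(\1)=\1$, any (generalized) eigenvector of $\cT$ corresponding to an eigenvalue $\lambda\neq 1$ is Hilbert--Schmidt-orthogonal to $\1$ and therefore traceless. Consequently the restriction of $\cZ(\cT)$ to the complexification of the traceless Hermitian subspace has spectrum $\{(1-\lambda)^{-1}:\lambda\in\Lambda\}$, with spectral radius $1/\min_{\lambda\in\Lambda}|1-\lambda|$. Because $\cZ(\cT)$ preserves both tracelessness and Hermiticity, $\tau(\cZ(\cT))$ is precisely the (real) operator norm of this restriction with respect to the trace norm, and this operator norm bounds the spectral radius from above. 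That gives the lower bound directly.

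For the upper bound I would use the Dunford--Taylor holomorphic functional calculus. Setting $M:=\cT-\cT^\infty$, the identity $(\cT-\cT^\infty)^n=\cT^n-\cT^\infty$ for $n\ge 1$ together with $\tau(\cT^n)\le 1$ implies that $M$ is power-bounded in the $\tau$-norm. Since $1\notin{\rm spec}(M)\subset\Lambda\cup\{0\}$ and $\cZ(\cT)=(\id-M)^{-1}$ on the traceless subspace, one can write
\be
\cZ(\cT)=\frac{1}{2\pi i}\oint_\gamma\frac{(zI-M)^{-1}}{1-z}\,\d z,
\ee
where $\gamma$ is a positively oriented contour enclosing ${\rm spec}(M)$ but avoiding $z=1$. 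A natural candidate is (essentially) the boundary of $\{|z|\le R\}\setminus B(1,\delta)$, with $R$ slightly larger than $1$ and $\delta$ a fixed fraction of $\min_{\lambda\in\Lambda}|1-\lambda|$. On the portion of $\gamma$ outside the unit disk, power-boundedness yields $\|(zI-M)^{-1}\|_{1\to 1}\le(|z|-1)^{-1}$, and optimising $R$ (together with the length of the detour around $1$ and the connecting segments) is what is expected to account for the explicit prefactor $2(5\pi/3+2\sqrt{2})$.

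The main obstacle is to control the resolvent on the part of $\gamma$ near $z=1$ that dips inside the unit disk, where the geometric-series bound fails. There one has to invoke a dimension-dependent resolvent estimate for power-bounded matrices: using that $M$ acts on a space of real dimension at most $d^2-1$, a Schur/Jordan decomposition combined with $\min_{\mu\in{\rm spec}(M)}|z-\mu|\gtrsim \delta$ on the small arc yields a bound $\|(zI-M)^{-1}\|_{1\to 1}\le\mathrm{poly}(d)/\min_{\lambda\in\Lambda}|1-\lambda|$, and this is the source of the $d^3$ factor in the statement. Patching the outer and inner pieces together inside the contour integral, taking trace norms, and tracking the numerical constants carefully then produces the claimed inequality.
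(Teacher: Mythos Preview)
Your lower bound argument is essentially the same as the paper's: both reduce to the observation that the trace-norm operator norm of $\cZ$ on the traceless subspace dominates the spectral radius of its restriction, whose spectrum is $\{(1-\lambda)^{-1}:\lambda\in\Lambda\}$. The paper spells out the passage from the real (Hermitian) to the complex (general traceless) setting via the decomposition $\sigma=\sigma_++i\sigma_-$ and a limit $k\to\infty$ in Gelfand's formula; you compress this into one sentence, but the content is the same.

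For the upper bound, however, there is a genuine gap. The paper does not use a contour integral at all: it observes that $\Delta:=\cT-\cT^\infty$ (your $M$) is power-bounded with $\|\Delta^n\|_{1\to 1}\le 2$ and then directly invokes a resolvent estimate of Zarouf for power-bounded operators on a $D$-dimensional complex Banach space,
\[
\|(\mu\,\id-\Delta)^{-1}\|\ \le\ \frac{C\,(5\pi/3+2\sqrt{2})\,D^{3/2}}{\min_{\lambda\in{\rm spec}[\Delta]}|\mu-\lambda|},\qquad C=\sup_n\|\Delta^n\|,
\]
evaluated at $\mu=1$, $D=d^2$, $C\le 2$. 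Both the constant $5\pi/3+2\sqrt{2}$ and the exponent $D^{3/2}=d^3$ come from Zarouf's Wiener-algebra functional-calculus argument, not from any contour manipulation. Your proposal instead expresses $\cZ=(\id-\Delta)^{-1}$ as a Dunford integral of $(z\,\id-\Delta)^{-1}/(1-z)$ over a contour near the unit circle, which only relocates the difficulty: on the arc that enters the open disc near $z=1$, you still need a sharp resolvent bound for a power-bounded matrix, and ``Schur/Jordan decomposition'' does not supply one. A Schur factorisation controls $(z\,\id-\Delta)^{-1}$ only through $\min|z-\lambda|$ together with the norm of the strictly upper-triangular part, and power-boundedness gives no uniform control on the latter; a naive Jordan estimate typically yields factors that are exponential rather than polynomial in the dimension. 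There is no evident way to extract $d^3$ or the numerical prefactor from your outline. You also omit the continuity argument needed when $\cT$ has peripheral eigenvalues other than $1$ (so that ${\rm spec}[\Delta]$ touches the unit circle and Zarouf's bound does not apply verbatim); the paper handles this by deforming $\cT$ to $\cT_\epsilon$, checking that $\Delta_\epsilon$ remains power-bounded with the same constant $2$, and letting $\epsilon\to 0$.
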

\begin{proof}
We begin with proving the left hand inequality---guided by the techniques developed for classical Markov chains in \cite{Sen2}. We abbreviate $\cZ:=\cZ(\cT)$ and note  that $\cZ$ is trace-preserving, since $\cZ^{-1}$ is trace-preserving and therefore $\tr[\cZ(X)]=\tr[\cZ^{-1}\circ\cZ(X)]=\tr[X]$. Consequently, $\cZ^*(\1)=\1$.  We write $\cP$ for the projection onto the invariant subspace of $\cM_d(\C)$ corresponding to the eigenvalue $1$ of $\cZ$. Note that this implies that
\begin{align*}
(\cZ- \cP)^k=\cZ^k\circ(\id- \cP).
\end{align*}
Using the fact that any matrix $\sigma$ can be expressed as a sum of a Hermitian matrix $\sigma_+:=(\sigma+\sigma^\dagger)/2$ and a skew-Hermitian matrix $i\sigma_-:=(\sigma-\sigma^\dagger)/2$, i.e., $\sigma=\sigma_++i\sigma_-$,  we can bound 
\begin{align}
&\Norm{(\cZ- \cP)^k}{1\rightarrow 1}=\Norm{\cZ^k\circ(\id- \cP)}{1\rightarrow 1}\nonumber\\
&=\sup_{\sigma=\sigma_++i\sigma_-}\frac{\Norm{\cZ^k\circ(\id- \cP)(\sigma_++i\sigma_-)}{1}}{\Norm{\sigma_++i\sigma_-}{1}}\nonumber\\
&\leq\sup_{\sigma_+,\sigma_-}\frac{\Norm{\cZ^k\circ(\id- \cP)(\sigma_+)}{1}}{\Norm{\sigma_+}{1}}+\frac{\Norm{\cZ^k\circ(\id- \cP)(\sigma_-)}{1}}{\Norm{\sigma_-}{1}}\label{added}\\
&\leq 2\sup_{\sigma^\dagger=\sigma\atop \trace{\sigma}=0} \frac{\Norm{\cZ^k(\sigma)}{1}}{\Norm{\sigma}{1}} \Norm{\id-\cP}{1\rightarrow 1}\label{usehermprop}\\
&=2\tau(\cZ^k) \Norm{\id-\cP}{1\rightarrow 1}\nonumber\\
&\leq 2\tau(\cZ)^k \Norm{\id-\cP}{1\rightarrow 1}\label{usepowers}.
\end{align}

To obtain Eq.~\eqref{added} we apply the triangle inequality in the numerator and note that again by the triangle inequality $\Norm{\sigma_i}{1}\leq\Norm{\sigma}{1}$, $i\in\{+,-\}$ holds to bound the denominator.
Inequality~\eqref{usehermprop} exhibits the fact that both $\sigma_+$ and $\sigma_-$ are Hermitian and that $(\id-\cP)(\sigma)$ is traceless for all $\sigma$. To obtain Eq.~\eqref{usepowers} we used Eq.~\eqref{boundingpowers}. Taking the $k$ 'th root and the limit  $k\rightarrow\infty$ on both sides of the above derivation, we conclude with Eq.~\eqref{specrad} that
\begin{align}
\varrho(\cZ-\cP)\leq\tau(\cZ)\label{spec}.
\end{align}
That is, $\tau(\cZ)$ provides an upper bound on the modulus of all non-unit eigenvalues of $\cZ$.
Finally, note that the spectrum of $\cZ$ is given by ${\rm spec}[\cZ]=\{1\}\cup\{\frac{1}{1-\lambda}\}_{\lambda\in\Lambda}$ from which the lower bound in the theorem follows.

For the upper bound we use known results from non-classical spectral theory. The core observation is that the map $\Delta:=\cT-\cT^\infty$ is \emph{power bounded} since 
\bea\nonumber\Norm{\Delta^n}{1\rightarrow1}&=&\Norm{\cT^n-\cT^\infty}{1\rightarrow1}\\
&\leq&\Norm{\cT^n}{1\rightarrow1}+\Norm{\cT^\infty}{1\rightarrow1}=2.\nonumber\eea


In \cite{Rachid} it has been shown that the resolvent of a general power bounded operator $\Delta$, which acts on a complex $D$-dimensional Banach space and whose spectrum is contained in the open unit disc, satisfies 
\begin{align}
\Norm{(\mu\:\id-\Delta)^{-1}}{}\leq \frac{C\left(\frac{5\pi}{3}+2\sqrt{2}\right) D^{3/2}}{\min_{\lambda\in{\rm spec}[\Delta]}|\mu-\lambda|}\label{nonclass},
\end{align}
for all $\abs{\mu}\geq1$ and $C:=\sup_n\Norm{\Delta^n}{}$, where $\Norm{\cdot}{}$ denotes the usual operator norm induced by the norm of the Banach space. The core observation in \cite{Rachid} is that one can bound the norm $\Norm{(\mu\:\id-\Delta)^{-1}}{}$ by employing a Wiener algebra functional calculus and bounding $\norm{\frac{1}{\mu-z}}{W/{mW}}:=\inf\{\norm{\frac{1}{\mu-z}+mg}{W}|\:g\in W\}$, where $\Norm{\cdot}{W}$ denotes the  Wiener norm and $m\neq0$ is the minimal degree polynomial annihilating $\Delta$, i.e., $m(\Delta)=0$. For more details concerning the techniques employed see \cite{Rachid,AC,szrewo} and references therein.

Suppose for now, that the only eigenvalue of $\cT$ of magnitude one is $1$. Then, all eigenvalues of $\Delta$ are contained in the open unit disc. Setting $D=d^2$, $\mu=1$, observing that ${\rm spec}[\Delta]=\Lambda$ and bounding $C\leq 2$, Eq.~\eqref{nonclass} specializes to the upper bound claimed in the theorem.

To incorporate the case when $\cT$ has eigenvalues of magnitude one other than $1$, i.e., when the spectrum of $\Delta$ is merely contained in the closed unit disc, we employ an argument based on continuity. We consider a map $\cT_\epsilon$ whose spectrum differs from the one of $\cT$ in that the peripheral eigenvalues other than $1$ of $\cT$ are shifted ``by $\epsilon$\rq{}\rq{} radially towards the center of the unit disc. More precisely, we define $\cT_\epsilon:=\cT-\epsilon(\cT_\phi-\cT^\infty)$, where $\cT_\phi$ denotes the part of the spectral decomposition of $\cT$ which belongs to all eigenvalues of magnitude one, i.e., if $\cT=\sum_k\lambda_k \cP_k$ then $\cT_\phi=\sum_{k:\abs{\lambda_k}=1}\lambda_k\cP_k$. Exploiting the relations between $\cT,\cT^\infty$ and $\cT_\phi$ we can show that 
\bea \Delta_\epsilon^n &:=& \big(\cT_\epsilon-\cT^\infty\big)^n\nonumber\\
&=&\cT^n-(1-\epsilon)^n\cT^\infty+[(1-\epsilon)^n-1]\cT_\phi^n.\nonumber\eea 
Since the involved maps are all positive and trace-preserving, and thus have unit norm, this implies that $\Delta_\epsilon$ is power bounded with $\Norm{\Delta_\epsilon^n}{1\ra 1}\leq2$ as before. Thus, for $\mu=1$ and any (small enough) $\epsilon>0$ the above assertion~\eqref{nonclass} holds for $\Delta_\epsilon$. Then by continuity  the statement stays true even for $\epsilon=0$. 

\qed
\end{proof}
\section{Finite time perturbation bounds}\label{FIN}
So far, we have analysed the stability of the fixed point of a quantum channel and in this sense the robustness of the asymptotic time evolution of the corresponding quantum Markov chain. In this section we will extend the analysis to finite times, first for discrete and then for continuous time evolutions. A second point in which the following approach differs from the previous one is that it uses the assumption of an exponential convergence bound as an additional ingredient.
\subsection{Evolution in discrete time}
\begin{theorem}\label{MIT} For $n\in\mathbb{N}_0$ let $\rho_n:=\cT^n(\rho_0)$ and  $\sigma_n:=\cE^n(\sigma_0)$ be the evolution of two density matrices with respect to two positive and trace-preserving linear maps $\cT,\cE:\cM_d(\C)\ra\cM_d(\C)$.
If $\cT$ has a unique stationary state and $\Norm{\cT^n-\cT^\infty}{1\rightarrow 1}\leq K\cdot \mu^n$ for $K\geq 0$, $\mu<1$ and all $n\in\mathbb{N}_0$, then we can bound the distance between the evolved states by
\begin{align*}
&\Norm{\rho_n-\sigma_n}{1}\leq\\
&\begin{cases}
\Norm{\rho_0-\sigma_0}{1}+n \Norm{\cE-\cT}{1\rightarrow 1}\ \textnormal{for}\  n\leq\hat{n}\\
 K\mu^n\Norm{\rho_0-\sigma_0}{1}+\left(\hat{n}+K \frac{\mu^{\hat{n}}-\mu^n}{1-\mu}\right)\Norm{\cE-\cT}{1\rightarrow 1}\:\textnormal{for}\  n>\hat{n},
\end{cases}
\end{align*}
where $\hat{n}:=\left\lceil\frac{\log(1/K)}{\log(\mu)}\right\rceil$.
\end{theorem}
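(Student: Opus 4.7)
The plan is to combine a one-step telescoping of $\cT^n-\cE^n$ with the assumed exponential bound, and to interpolate between that bound and the trivial estimate $\|\cT^m\|_{1\to 1}=1$ at the threshold $\hat{n}$. Writing
\begin{equation*}
\rho_n-\sigma_n=\cT^n(\rho_0-\sigma_0)+\sum_{k=0}^{n-1}\cT^{\,n-1-k}\circ(\cT-\cE)\circ\cE^k(\sigma_0),
\end{equation*}
splits the error into a part propagated from the initial discrepancy and an accumulation of $n$ single-step perturbations, each propagated by an appropriate power of $\cT$.

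Next I would exploit the fact that every operator on which a power of $\cT$ acts in this identity is traceless: $\rho_0-\sigma_0$ is the difference of two states, and each $\delta_k:=(\cT-\cE)\circ\cE^k(\sigma_0)$ is traceless because both $\cT$ and $\cE$ are trace-preserving. Since $\cT$ has a unique stationary state $\rho_*$, the projection $\cT^\infty$ acts as $X\mapsto\tr[X]\,\rho_*$ and therefore annihilates every traceless input. Thus, for any traceless $X$ and every $m\geq 0$,
\begin{equation*}
\cT^m(X)=(\cT^m-\cT^\infty)(X),
\end{equation*}
so the hypothesis yields $\|\cT^m(X)\|_1\leq K\mu^m\|X\|_1$, while the trivial bound $\|\cT^m(X)\|_1\leq\|X\|_1$ also holds (since $\cT$ is positive and trace-preserving). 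Taking the smaller of the two and using $\|\delta_k\|_1\leq\|\cE-\cT\|_{1\to 1}$ (since $\cE^k(\sigma_0)$ is a state), the triangle inequality produces the master estimate
\begin{equation*}
\|\rho_n-\sigma_n\|_1\leq\min(1,K\mu^n)\,\|\rho_0-\sigma_0\|_1+\Big(\sum_{m=0}^{n-1}\min(1,K\mu^m)\Big)\|\cE-\cT\|_{1\to 1}.
\end{equation*}

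Finally I would evaluate the minima. Since $\mu<1$, the condition $K\mu^m\leq 1$ is equivalent to $m\geq\log(1/K)/\log\mu$, so $\hat{n}=\lceil\log(1/K)/\log\mu\rceil$ is precisely the index at which the decay bound overtakes the trivial one. For $n\leq\hat{n}$ every relevant minimum equals $1$ and the master estimate reduces to $\|\rho_0-\sigma_0\|_1+n\|\cE-\cT\|_{1\to 1}$, matching the first case of the theorem. For $n>\hat{n}$ I would split the sum at $m=\hat{n}$: the first $\hat{n}$ terms each contribute $1$, while the tail is a geometric progression summing to $K(\mu^{\hat{n}}-\mu^n)/(1-\mu)$, yielding the second case.

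The only real difficulty is the bookkeeping at the threshold: aligning the ceiling in $\hat{n}$ with the index ranges of the sum and confirming that the endpoints of the geometric progression produce the exact $\mu^{\hat{n}}-\mu^n$ prefactor that appears in the statement. Once the telescoping identity and the observation that all inputs fed into powers of $\cT$ are traceless are in place, the remaining work is purely arithmetic.
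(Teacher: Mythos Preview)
Your proposal is correct and follows essentially the same approach as the paper: the same telescoping identity, the same observation that all inputs to powers of $\cT$ are Hermitian and traceless, and the same splitting of the resulting sum at $\hat n$. The only cosmetic difference is that the paper routes the key estimate through the ergodicity coefficient $\tau(\cT^m)$ (bounding it via its pure-state characterization and the assumed convergence), whereas you argue directly that $\cT^\infty$ annihilates traceless inputs so that $\|\cT^m(X)\|_1\leq\|\cT^m-\cT^\infty\|_{1\to1}\|X\|_1\leq K\mu^m\|X\|_1$; this is marginally more direct but mathematically equivalent.
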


\begin{remark}
Before we prove this statement, let us mention known pairs  $(K,\mu)$ to which the theorem might be applied. For definitions and further details we refer to the references:

(i) For $K=1$ one can choose $\mu=\tanh(\Delta/4)$, where $\Delta$ is the \emph{projective diameter} of the map $\cT$, measured in terms of Hilbert's projective metric \cite{Hilbert}.

(ii) For $K=\sup_\rho \big[\chi_k^2(\rho,\sigma)\big]^{1/2}$ a particular $\chi^2$\emph{-divergence} and $\sigma$ the stationary state of $\cT$, we can choose $\mu$ to be the second largest singular value of $\Omega:=[\Omega_\sigma^k]^{1/2}\circ\cT\circ[\Omega_\sigma^k]^{-1/2}$ where $\Omega_\sigma^k$ is a map on which the chosen $\chi^2$-divergence is based on \cite{Chi2}. 

If $\lambda_{min}$ is the smallest eigenvalue of $\sigma$, a particular choice results in $K=(\lambda_{min}^{-1}-1)^{1/2}$ and $\Omega(X)=\sigma^{-1/4}\cT\big(\sigma^{1/4}X\sigma^{1/4}\big)\sigma^{-1/4}$.

(iii) If $\lambda_{min}$ is the smallest eigenvalue of the stationary state of $\cT$, we can choose $K=\sqrt{-2\log\lambda_{min}}$ and $\mu$ determined by a logarithmic Sobolev inequality \cite{LogSobolev}. Strictly speaking, those bounds apply to the continuous time case, which we discuss in Thm.\ref{MIT2} below.

(iv) If there is a similarity transformation such that $S\circ\cT\circ S^{-1}$ is a normal operator on $\cM_d(\C)$, we can choose $\mu:=\max\{|\lambda|\big|\lambda\in{\rm spec}[\cT]\setminus\{1\}\}$ and $K=\sqrt{2d}\kappa_\cT$, where $\kappa_\cT:=\|S\otimes S^{-1}\|_{2\ra 2}$. The latter can be upper bounded by $\kappa_\cT\leq\lambda_{min}^{-1/2}$ if $\cT$ satisfies \emph{detailed balance} w.r.t. to its stationary state.

(v) Finally, we note that the assumption that $\Norm{\cT^n-\cT^\infty}{1\rightarrow 1}\leq K\cdot \mu^n$ for $K\geq 0$, $\mu<1$ of Thm.4 implies that all non-unit eigenvalues of $\cT$ are contained in the open unit disc. In this situation more elaborate bounds, which only depend on the spectrum of $\cT$ can be given. In \cite{szrewo} a Wiener algebra functional calculus is employed to obtain spectral convergence bounds for classical and quantum Markov chains. The techniques of \cite{szrewo} are new even to the theory of classical Markov chains and do not rely on additional assumptions such as \emph{detailed balance}. The derivation of Corollary IV.4 \cite{szrewo} yields that for any $\mu$ such that $\abs{\lambda_i}<\mu<1\  \forall i$, where ${(\lambda_i)}_{i=1,...,d^2}$ denote the eigenvalues of $\cT-\cT^\infty$ it holds that
\begin{align}
\Norm{\cT^n-\cT^\infty}{1\rightarrow1}
&\leq\frac{4 e\sqrt{\abs{m}}}{\left(1-\mu\right)^{3/2}}\:\sup_{\abs{z}=\mu}\abs{\prod_{i\in m}\frac{1-\bar{\lambda}_iz}{z-\lambda_i}}\:\mu^{n+1}\nonumber\\
&\leq\frac{4 e\sqrt{\abs{m}}}{\left(1-\mu\right)^{3/2}}\:\prod_{i\in m}\frac{1-\mu\abs{\lambda_i}}{\mu-\abs{\lambda_i}}\:\mu^{n+1}.\label{eq:superbound}
\end{align}
Here, $m$ denotes the minimal polynomial of $\cT-\cT^\infty$ and $\abs{m}$ is the number of linear factors in $m$. The product in Eq.(\ref{eq:superbound}) is taken over all $i$ such that the corresponding factor $(z-\lambda_i)$ occurs in the prime factorization of $m$.

Thm.\ref{MIT} together with Eq.\eqref{eq:superbound} provide a purely spectral bound on the sensitivity of a Markov chain under perturbations. Even  compared to the results for classical Markov chains in \cite{Mit1} (on which our derivation of Thm.\ref{MIT} builds), bounds based on $\eqref{eq:superbound}$ yield a significant improvement (compare \cite{Mit1}, Thm.4.1). Our bound proves that the distance of the subdominant eigenvalues of $\cT$ to the spectral radius of $\cT-\cT^\infty$ determines the sensitivity of the chain to perturbation, while their mutual distances, i.e., the quantities $\abs{\lambda_i-\lambda_j}$ for general $i,j$ are not relevant (compare \cite{Mit1}, Thm.4.1). 
We refer to \cite{szrewo} for a discussion of Eq.\eqref{eq:superbound} and related results.

It is also possible to use Corollary IV.4 of \cite{szrewo} directly to derive stability estimates. Note, however, that if in $\Norm{\cT^n-\cT^\infty}{1\rightarrow 1}\leq K\cdot \mu^n$ we allow that $\mu$ equals the spectral radius of $\cT-\cT^\infty$ then the prefactor will depend on $n$. More precisely, \cite{szrewo} Thm. III.2 yields that in this case $K=K(n)=Cn^{d_\mu-1}$, where $C$ does not depend on $n$ and $d_\mu$ denotes the size of the largest Jordan block of any eigenvalue of magnitude $\mu$. It is not difficult to extend the derivation of Thm.~\ref{MIT} to the situation, where $\mu$ is the spectral radius of $\cT-\cT^\infty$ and $K(n)=Cn^{d_\mu-1}$.

\end{remark}

\begin{proof} of Thm.\ref{MIT}. 
The proof is guided by techniques used in \cite{Mit1} for classical Markov processes.
 First we note that in general for linear maps $\cT,\cE,$ 
\begin{align*}
\cE^n=\cT^n+\sum_{i=0}^{n-1}\cT^{n-i-1}\circ(\cE-\cT)\circ\cE^{i}\quad n\geq1
\end{align*}
holds, which can easily be shown by induction. Applying the above to the state $\sigma_0$ and subtracting $\rho_n$ from both sides gives
\begin{align*}
\sigma_n-\rho_n=\cT^n(\sigma_0-\rho_0)+\sum_{i=0}^{n-1}\cT^{n-i-1}\circ(\cE-\cT)(\sigma_i)
\end{align*}
from which we conclude that
\begin{align}
&\Norm{\sigma_n-\rho_n}{1}\leq\label{norm}\\
&\qquad\Norm{\cT^n(\sigma_0-\rho_0)}{1}+\sum_{i=0}^{n-1}\Norm{\cT^{n-i-1}\circ(\cE-\cT)(\sigma_i)}{1}.\nonumber
\end{align}
We now find upper bounds for the norm terms appearing on the right hand side of  Eq. \eqref{norm}.
The fact that $\cE(\sigma_i)-\cT(\sigma_i)$ is Hermitian and traceless implies that
\bea\nonumber
\Norm{\cT^{n-i-1}\circ(\cE-\cT)(\sigma^i)}{1}&\leq&\tau(\cT^{n-i-1})\Norm{\cE-\cT}{1\rightarrow1},\\ \nonumber
\mbox{and }\ 
\Norm{\cT^n(\sigma_0-\rho_0)}{1}&\leq&\tau(\cT^n)\Norm{\rho_0-\sigma_0}{1}.
\eea
Thus, from Eq.~\eqref{norm} we conclude that
\begin{align}
\Norm{\rho_n-\sigma_n}{1}\leq\tau(\cT^n)\Norm{\rho_0-\sigma_0}{1}+\Norm{\cE-\cT}{1\rightarrow1}\sum_{i=0}^{n-1}\tau(\cT^i).\label{finish}
\end{align}
The term $\tau(\cT^n)$  can in turn be bounded using Eq.~\eqref{ergcoeff}  and the assumed convergence properties of $\cT$ by
\begin{align*}
\tau(\cT^n)&=\sup_{\sigma^\dagger=\sigma\atop \trace{\sigma}=0} \frac{\Norm{\cT^n(\sigma)}{1}}{\Norm{\sigma}{1}}=\frac{1}{2}\sup_{\ket{\phi},\ket{\psi}}\Norm{\cT^n{(\phi)}-\cT^n{(\psi)}}{1}\\
&\leq\sup_{\ket{\phi}}\Norm{\cT^n{(\phi)}-\cT^\infty(\phi)}{1}\leq K\cdot\mu^n.
\end{align*}
Note that the first inequality requires uniqueness of the stationary state, i.e., that $\cT^\infty(\phi)=\cT^\infty(\psi)$.

Alternatively, we can use that $\cT^n$ is trace-preserving and positive, so that in total
\begin{align*}
\tau{(\cT^n)}\leq
\begin{cases}
1\ &\textnormal{for}\ n<{\hat{n}}\\
K\cdot\mu^n\ &\textnormal{for}\ n\geq\hat{n}.
\end{cases}
\end{align*}
We now find a suitable upper bound on $\sum_i\tau(\cT^i)$ by always choosing the better of the two bounds for $\tau(\cT^i)$. In this way we obtain
\begin{align}
\sum_{i=0}^{n-1}\tau(\cT^i)&\leq\hat{n}+\sum_{i=\hat{n}}^{n-1}\tau(\cT^i)\leq\hat{n}+K\cdot\mu^{\hat{n}}\sum_{i=0}^{n-\hat{n}-1}\mu^i\nonumber\\
&=\hat{n}+K\cdot\mu^{\hat{n}}\frac{1-\mu^{n-\hat{n}}}{1-\mu}.\label{altapp}
\end{align}
Plugging this expression into Eq.~\eqref{finish} and again choosing the better bound for $\tau(\cT^n)$ concludes the proof of the theorem.\qed
\end{proof}
If we take the limit $n\rightarrow\infty$ in Thm.~\ref{MIT} and use that $K\cdot\mu^{\hat{n}}\leq1$ is by definition of $\hat{n}$ basically an equality, we obtain a perturbation bound for the asymptotic states:
\begin{corollary} \label{MITC} Under the conditions of Thm.~\ref{MIT} 
\be\label{eq:Cor:normdist}\limsup_{n\ra\infty}\Norm{\rho_n-\sigma_n}{1}\leq\left(\hat{n}+\frac{1}{1-\mu}\right)\Norm{\cE-\cT}{1\rightarrow 1}.\ee
\end{corollary}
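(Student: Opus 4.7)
The plan is to simply pass to the limit $n\to\infty$ in the second branch of the inequality supplied by Thm.~\ref{MIT}. For $n>\hat n$ that theorem gives
\begin{equation*}
\Norm{\rho_n-\sigma_n}{1}\leq K\mu^n\Norm{\rho_0-\sigma_0}{1}+\left(\hat n+K\frac{\mu^{\hat n}-\mu^n}{1-\mu}\right)\Norm{\cE-\cT}{1\rightarrow 1},
\end{equation*}
and since every quantity inside the right hand side is well-defined uniformly in $n$, the first step is to apply $\limsup_{n\to\infty}$ to both sides.

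Because $\mu<1$ the two occurrences of $\mu^n$ go to zero. The first term $K\mu^n\|\rho_0-\sigma_0\|_1$ therefore vanishes, and the bracket converges to $\hat n+K\mu^{\hat n}/(1-\mu)$. Thus
\begin{equation*}
\limsup_{n\to\infty}\Norm{\rho_n-\sigma_n}{1}\leq\left(\hat n+\frac{K\mu^{\hat n}}{1-\mu}\right)\Norm{\cE-\cT}{1\rightarrow 1}.
\end{equation*}

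The only remaining step is to verify that $K\mu^{\hat n}\leq 1$, which lets me replace the coefficient by $1/(1-\mu)$ and obtain the announced bound. This follows directly from the definition $\hat n=\lceil \log(1/K)/\log(\mu)\rceil$: since $\log\mu<0$ we have $\hat n\,\log(1/\mu)\geq\log K$, hence $(1/\mu)^{\hat n}\geq K$, i.e.\ $K\mu^{\hat n}\leq 1$. Plugging this in gives Eq.~\eqref{eq:Cor:normdist}.

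There is essentially no obstacle here; the content of the corollary is entirely carried by Thm.~\ref{MIT}, and the argument reduces to a limit together with the elementary observation $K\mu^{\hat n}\leq 1$. The only thing worth flagging is that, as noted in the remark preceding the corollary, the ceiling in $\hat n$ makes $K\mu^{\hat n}$ strictly less than~$1$ in general but close to it, so the replacement $K\mu^{\hat n}\to 1$ is essentially tight and no information is wasted in this step.
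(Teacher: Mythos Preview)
Your proof is correct and follows exactly the paper's own argument: take $n\to\infty$ in the second branch of Thm.~\ref{MIT} and use $K\mu^{\hat n}\leq 1$, which is precisely what the paper sketches in the one sentence preceding the corollary. Your verification of $K\mu^{\hat n}\leq 1$ from the definition of $\hat n$ is a welcome detail the paper leaves implicit.
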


\subsection{Evolution in continuous time}
The following is the quantum counterpart of  the results on classical Markov chains in \cite{Mit2}:  
\begin{theorem} \label{MIT2}
Let $\cT^t=e^{t\mathfrak{L}_\cT}$ and $\cE^t=e^{t\mathfrak{L}_\cE}$ with $t\in\mathbbm{R}_+$ be two  one-parameter semi groups of positive and trace-preserving linear maps on $\cM_d(\C)$. Write $\rho(t):=\cT^t(\rho_0)$ and $\sigma(t):=\cE^t(\sigma_0)$ for the evolution of two density matrices and assume that $\cT^t$ has a unique stationary state and that 
$\forall t>0:\Norm{\cT^t-\cT^\infty}{1\ra 1}\leq K e^{-\nu\ t}$ for some $K,\nu>0$. Then
\begin{align*}&\Norm{\rho(t)-\sigma(t)}{1}\leq\\
&\quad\begin{cases}
 \Norm{\rho_0-\sigma_0}{1}+t \Norm{\mathfrak{L}_\cE-\mathfrak{L}_\cT}{1\rightarrow 1},
 \quad\textnormal{for}\ t<\hat{t}\\
 Ke^{-\nu t}\Norm{\rho_0-\sigma_0}{1}+\frac{\log(K)+1-Ke^{-\nu t}}{\nu}\Norm{\mathfrak{L}_\cE-\mathfrak{L}_\cT}{1\rightarrow 1}\\
\qquad\textnormal{for}\ t\geq\hat{t}
\end{cases},
\end{align*}
where $\hat{t}:=\frac{\log(K)}{\nu}$.
\end{theorem}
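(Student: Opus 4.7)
The plan is to mimic the derivation of Thm.~\ref{MIT}, replacing the discrete summation identity with its continuous-time analogue (Duhamel's formula).

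First, I would establish the variation-of-parameters identity
\begin{align*}
\cE^t=\cT^t+\int_0^t \cT^{t-s}\circ(\mathfrak{L}_\cE-\mathfrak{L}_\cT)\circ\cE^s\,ds,
\end{align*}
which follows by differentiating $s\mapsto\cT^{t-s}\circ\cE^s$ (using that each generator commutes with its own semigroup) and integrating over $[0,t]$. Applying both sides to $\sigma_0$, subtracting $\rho(t)=\cT^t(\rho_0)$, and invoking the triangle inequality then yields
\begin{align*}
\Norm{\rho(t)-\sigma(t)}{1}\leq{}&\Norm{\cT^t(\rho_0-\sigma_0)}{1}\\
&+\int_0^t\Norm{\cT^{t-s}\circ(\mathfrak{L}_\cE-\mathfrak{L}_\cT)(\sigma(s))}{1}\,ds.
\end{align*}

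Next I would note that $\rho_0-\sigma_0$ and $(\mathfrak{L}_\cE-\mathfrak{L}_\cT)(\sigma(s))$ are Hermitian and traceless (both semigroups preserve Hermiticity and trace, so the generators inherit these properties). Hence each term is controlled by the trace-norm contraction coefficient of $\cT$ at the relevant time, yielding
\begin{align*}
\Norm{\rho(t)-\sigma(t)}{1}&\leq \tau(\cT^t)\Norm{\rho_0-\sigma_0}{1}\\
&\quad+\Norm{\mathfrak{L}_\cE-\mathfrak{L}_\cT}{1\rightarrow 1}\int_0^t\tau(\cT^s)\,ds.
\end{align*}

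The remaining step is a case distinction for $\tau(\cT^s)$. I would combine the trivial estimate $\tau(\cT^s)\leq 1$, valid since $\cT^s$ is positive and trace-preserving, with $\tau(\cT^s)\leq Ke^{-\nu s}$, which follows from the exponential convergence hypothesis together with the uniqueness of the stationary state and the pure-state representation of $\tau$ in Eq.~\eqref{ergcoeff}, exactly as in the proof of Thm.~\ref{MIT}. The cross-over time at which the two bounds coincide is precisely $\hat{t}=\log(K)/\nu$. Splitting the integral at $\hat{t}$ and using the better bound on each piece gives, for $t\geq\hat{t}$,
\begin{align*}
\int_0^t\tau(\cT^s)\,ds\leq\hat{t}+K\int_{\hat{t}}^t e^{-\nu s}\,ds=\frac{\log(K)+1-Ke^{-\nu t}}{\nu},
\end{align*}
where the last equality uses the defining relation $Ke^{-\nu\hat{t}}=1$; for $t<\hat{t}$ the integral is trivially bounded by $t$. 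Combining these with the corresponding piecewise bound for $\tau(\cT^t)$ recovers the two cases in the theorem.

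The only real subtlety is the clean justification of the Duhamel identity, but in the present finite-dimensional setting the generators are bounded and the semigroups are norm-continuous, so the identity reduces to the fundamental theorem of calculus applied to the norm-differentiable map $s\mapsto\cT^{t-s}\circ\cE^s$. Beyond that, no ingredients are needed that were not already used in Thm.~\ref{MIT}.
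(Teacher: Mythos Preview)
Your proposal is correct and follows essentially the same route as the paper: the Duhamel (variation-of-parameters) identity, the traceless-Hermitian observation yielding the bound in terms of $\tau(\cT^s)$, and the piecewise estimate for $\tau(\cT^s)$ split at $\hat t$. The paper's only addition is a citation for the Duhamel identity, which you instead justify directly via norm-differentiability in finite dimensions.
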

\begin{proof} The proof goes along the lines of the proof of Thm.~\ref{MIT}. The difference between two dynamical semi-groups can be expressed using their generators as \cite{Phillips} 
\begin{align*}
\cE_t=\cT_t+\int_0^t{\cT_{t-s}\circ\left(\mathfrak{L}_\cE-\mathfrak{L}_{\cT}\right)\circ\cE_s}\:\d{s}.
\end{align*}
Following the derivation of Thm.~\ref{MIT} and using that $\forall X:\tr[(\mathfrak{L}_\cE-\mathfrak{L}_{\cT})(X)]=0$, we obtain the continuous time analogue of Eq.~\eqref{finish},
\begin{align*}
&\Norm{\rho(t)-\sigma(t)}{1}\leq\\
&\qquad\tau(\cT_t)\Norm{\rho_0-\sigma_0}{1}+\Norm{\mathfrak{L}_\cE-\mathfrak{L}_\cT}{1\ra 1}\int_0^t\tau(\cT_u)\:\d{u}.
\end{align*}
Again, it is possible to state upper bounds for $\tau(\cT_t)$ for small and large $t$, respectively. We have that
\begin{align*}
\tau{(\cT_t)}\leq
\begin{cases}
1\ &\textnormal{for}\ t\leq{\hat{t}}\\
K\cdot e^{-\nu t}\ &\textnormal{for}\ t>\hat{t},
\end{cases}
\end{align*}
where $\hat{t}:=\frac{\log(K)}{\nu}.$
The proof is then concluded following exactly the same lines as in  the proof of Thm.~\ref{MIT}.\qed
\end{proof}
Again we can consider the limit $t\ra\infty$ and thereby obtain a perturbation bound for the asymptotic evolution in terms of the distance between the generators and as a function of the convergence rate $\nu$:
\begin{corollary} Under the conditions of Thm.~\ref{MIT2} 
\begin{align*}
\limsup_{t\ra\infty}\Norm{\rho(t)-\sigma(t)}{1}\leq\frac{\log(K)+1}{\nu}\Norm{\mathfrak{L}_\cE-\mathfrak{L}_\cT}{1\rightarrow 1}.
\end{align*}
\end{corollary}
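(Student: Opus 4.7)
This corollary should follow by simply passing to the limit $t\to\infty$ in the bound of Theorem~\ref{MIT2}, so my plan is to argue that the $t\geq\hat{t}$ branch of that estimate becomes sharp asymptotically. First I would note that $\hat{t}=\log(K)/\nu$ is a fixed finite number, so eventually every $t$ satisfies $t\geq\hat{t}$ and we may work exclusively with the second case of the theorem, which reads
$$\Norm{\rho(t)-\sigma(t)}{1}\leq Ke^{-\nu t}\Norm{\rho_0-\sigma_0}{1}+\frac{\log(K)+1-Ke^{-\nu t}}{\nu}\Norm{\mathfrak{L}_\cE-\mathfrak{L}_\cT}{1\rightarrow 1}.$$

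Next I would use $\nu>0$ to conclude that $Ke^{-\nu t}\to 0$ as $t\to\infty$. This immediately kills the first term on the right-hand side, and simultaneously drives the coefficient of $\Norm{\mathfrak{L}_\cE-\mathfrak{L}_\cT}{1\ra 1}$ to $(\log(K)+1)/\nu$. Taking $\limsup_{t\to\infty}$ on the left and using that the right-hand side is a convergent deterministic function of $t$ then produces the claimed inequality.

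There is no real obstacle; the statement is a direct asymptotic readout of Theorem~\ref{MIT2}. The only conceptual point worth a brief remark is that we state the result as a $\limsup$ rather than a limit because the theorem only gives us an upper envelope for $\Norm{\rho(t)-\sigma(t)}{1}$, without any claim that this quantity itself converges; the $\limsup$ formulation is precisely what the envelope supplies.
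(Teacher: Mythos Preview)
Your proposal is correct and matches the paper's approach exactly: the corollary is obtained by taking $t\to\infty$ in the second branch of Theorem~\ref{MIT2}, using that $Ke^{-\nu t}\to 0$ since $\nu>0$. The paper does not even spell this out beyond the remark preceding the corollary, so your level of detail is already more than sufficient.
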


\section{Outlook}\label{CON}

We have established general perturbation bounds for fixed points of quantum Markov chains. The results focus on the trace norm, but it is clear from their derivation, that analogous bounds can be obtained for essentially any norm. For practical purposes and large systems, the derived bounds may be weaker than desired---owing to the fact that we do not impose and exploit any additional structure of transition map and perturbation. Investigating  bounds in more structured frameworks, where for instance Liouvillians as well as perturbations are geometrically local, seems to be a worthwhile direction for future studies.

We have also seen that perturbation bounds are linked to convergence bounds so that stronger perturbation bounds can be obtained from better convergence bounds. A detailed analysis of the latter, leading to bounds of the form in Eq.(\ref{eq:superbound}), will be given \cite{szrewo}. 

Clearly, one may also exploit the relation in the other direction and use the derived perturbation bounds in order to obtain lower bounds on mixing times for quantum Markov processes.

\ \\

\subparagraph{Acknowledgements:} 
We acknowledge financial support from the European project QUEVADIS, the CHIST-ERA/BMBF project CQC and the QCCC programme of the Elite Network of Bavaria.

\bibliographystyle{abbrv}

\begin{thebibliography}{10}

\bibitem{Bhatia}
R.~Bhatia.
\newblock {\em Matrix Analysis}.
\newblock Springer, 1996.

\bibitem{Overview}
E.~Cho and C.~Meyer.
\newblock {{Comparison of perturbations bounds for the stationary distributions
  of a Markov chain}}.
\newblock {\em Lin. Alg. and Appl.}, 335:1:137--150, 2001.

\bibitem{DiehlNature}
S.~Diehl, A.~Micheli, A.~Kantian, B.~Kraus, H.~B\"uchler, and P.~Zoller.
\newblock Quantum states and phases in driven open quantum systems with cold
  atoms.
\newblock {\em Nature Physics}, 4:878--883, 2008.

\bibitem{LogSobolev}
M.~Kastoryano and K.~Temme.
\newblock {{Quantum logarithmic Sobolev inequalities and rapid mixing}}.
\newblock 2012.
\newblock arXiv:1207.3261.

\bibitem{KrausPRA}
B.~Kraus, H.~P. B\"uchler, S.~Diehl, A.~Kantian, A.~Micheli, and P.~Zoller.
\newblock {{Preparation of entangled states by quantum Markov processes}}.
\newblock {\em Phys. Rev. A}, 78:042307, Oct 2008.

\bibitem{Mit2}
A.~Mitrophanov.
\newblock {{Stability and exponential convergence of continuous-time Markov
  chains}}.
\newblock {\em J. Appl. Prob.}, 40:970--979, 2003.

\bibitem{Mit1}
A.~Mitrophanov.
\newblock {{Sensitivity and convergence of uniformly ergodic Markov chains}}.
\newblock {\em J. Appl. Prob.}, 42:1003--1014, 2005.


\bibitem{AC}
  N.K.~Nikolski.
\newblock {Condition numbers of large matrices and analytic capacities}
\newblock {\em St. Petersburg Math. J.}, 17:641-682, 2006
  
\bibitem{Phillips}
R.~Phillips.
\newblock Perturbation theory for semi-groups of linear operators.
\newblock {\em Trans .Amer. Math. Soc}, 74:199--221, 1953.

\bibitem{Hilbert}
D.~Reeb, M.~J. Kastoryano, and M.~M. Wolf.
\newblock Hilbert's projective metric in quantum information theory.
\newblock {\em J. of Math. Phys.}, 52(8):082201, 2011.

\bibitem{Rus94}
M.~Ruskai.
\newblock {{Beyond strong subadditivity? Improved bounds on the contraction of
  generalized relative entropy}}.
\newblock {\em Reviews in Math. Phys.}, 6:1147--1161, 1994.

\bibitem{Schweitzer}
P.~Schweitzer.
\newblock {{Perturbation theory and finite Markov chains}}.
\newblock {\em J. Appl. Prob.}, 5:401--413, 1968.

\bibitem{Sen2}
E.~Seneta.
\newblock Ergodicity coefficients and spectrum localization.
\newblock {\em Lin. Alg. and Appl.}, 60:187--197, 1984.

\bibitem{Sen1}
E.~Seneta.
\newblock Perturbation of the stationary distribution measured by ergodicity
  coefficients.
\newblock {\em Adv. Appl. Prob.}, 20:228--230, 1988.

\bibitem{szrewo}
O.~Szehr, D.~Reeb, and M.~Wolf.
\newblock {{Spectral convergence bounds for classical and quantum Markov processes}}.
\newblock 2013.
\newblock arXiv: 1301.4827

\bibitem{Chi2}
K.~Temme, M.~J. Kastoryano, M.~B. Ruskai, M.~M. Wolf, and F.~Verstraete.
\newblock {{The chi[sup 2]-divergence and mixing times of quantum Markov
  processes}}.
\newblock {\em J. of Math. Phys.}, 51(12):122201, 2010.

\bibitem{QMetropolis}
K.~Temme, T.~Osborne, K.~Vollbrecht, D.~Poulin, and F.~Verstraete.
\newblock Quantum metropolis sampling.
\newblock {\em Nature}, 471:87--90, 2011.

\bibitem{VWC2008}
F.~Versraete, M.~Wolf, and I.~Cirac.
\newblock Quantum computation and quantum-state engineering driven by
  dissipation.
\newblock {\em Nature Physics}, 5 No.9:633--636, 2009.

\bibitem{WPG_eigenvalues}
M.~Wolf and D.~Perez-Garcia.
\newblock {{The inverse eigenvalue problem for quantum channels}}.
\newblock 2010.
\newblock arXiv:1005.4545.

\bibitem{Rachid}
R.~Zarouf.
\newblock {{Une amélioration d\rq{}un résultat de E.B. Davies et B. Simon}}.
\newblock {\em Comptes Rendus Mathematique}, I 347(15):939--942, 2009.
\newblock arXiv: 0903.2743v2.

\end{thebibliography}

\end{document}